\documentclass[aps,prl,reprint,onecolumn,superscriptaddress,longbibliography,amsmath,amssymb]{revtex4-2}
\usepackage[T1]{fontenc}
\usepackage[utf8]{inputenc}
\usepackage{amsthm,mathtools}
\usepackage{microtype}
\usepackage{booktabs}
\usepackage{graphicx}
\usepackage{tikz}
\usetikzlibrary{arrows.meta,positioning,calc}
\usepackage{hyperref}
\usepackage[nameinlink,noabbrev]{cleveref}
\hypersetup{hidelinks,
 pdftitle={A Simple Quantum Linear-System Solver via Dissipation},
 pdfauthor={Zhongxia Shang}}
\allowdisplaybreaks[2]
\newtheorem{theorem}{Theorem}

\newtheorem{proposition}[theorem]{Proposition}
\newtheorem{corollary}[theorem]{Corollary}
\theoremstyle{remark}

\DeclareMathOperator{\Tr}{Tr}
\DeclareMathOperator{\spanop}{span}

\DeclareMathOperator{\diag}{diag}
\newcommand{\ket}[1]{\lvert #1\rangle}
\newcommand{\bra}[1]{\langle #1\rvert}
\newcommand{\braket}[2]{\langle #1\vert #2\rangle}
\newcommand{\uket}[1]{\lvert #1\rangle\!\rangle}

\newcommand{\norm}[1]{\left\lVert #1\right\rVert}
\newcommand{\abs}[1]{\left\lvert #1\right\rvert}
\newcommand{\Lin}{\mathcal L}
\newcommand{\HS}{\mathcal H_S}
\newcommand{\HX}{\mathcal H_x}
\newcommand{\HR}{\mathcal H_{\mathrm{res}}}
\newcommand{\tmix}{t_{\mathrm{mix}}}
\newcommand{\eps}{\varepsilon}
\newcommand{\dd}{\mathrm d}
\newcommand{\ee}{\mathrm e}
\newcommand{\proofstep}[1]{\par\smallskip\noindent\textit{#1}\quad}

\begin{document}
\title{A Simple Quantum Linear-System Solver via Dissipation}
\author{Zhong-Xia Shang}
\email{zhongxia.shang@math.ku.dk}
\affiliation{Department of Mathematical Sciences, University of Copenhagen, Denmark}
\begin{abstract}
Dissipation has been recently demonstrated as a powerful primitive for designing quantum algorithms. We apply this viewpoint directly to linear-system solving, $Ax=b$. We construct a simple purely dissipative Lindbladian whose unique fixed point encodes the linear-system solution. We prove dimension-independent trace-distance mixing in $\Theta(\kappa^2\log(1/\eps))$ time. We then show how to run this Lindbladian on digital quantum computers via collective block encoding and Lindbladian simulation, resulting in an $O\!\left(\kappa^2\log(1/\eps)
             \frac{\log(\kappa/\eps)}{\log\log(\kappa/\eps)}\right)$ query complexity for both $U_A$, the block encoding of $A$, and $U_b$, the $|b\rangle$ state preparation unitary.
\end{abstract}
\maketitle

\section{Introduction}\label{sec:intro}
Uncontrolled dissipation and noise are usually regarded as obstacles to
quantum computation, motivating quantum error correction to protect
quantum information from their effects. However, suitably
engineered dissipation can itself serve as a computational resource:
time-independent, purely dissipative dynamics can perform universal
quantum computation, with the output encoded in a stationary
state~\cite{VWC}. Recent progress has extended this perspective to
concrete algorithmic tasks, including Gibbs-state
preparation~\cite{Chen2025ThermalSimulation,DingLiLin,RouzeFrancaAlhambra,RouzeFrancaAlhambraOptimal}
and ground-state preparation~\cite{DingChenLin}, solving linear ordinary
differential equations~\cite{ShangODE}, finding local energy minima~\cite{LocalMinima},
quantum phase estimation~\cite{ShangQPE}, and the approximation of
Hamiltonian dynamics~\cite{ShangFranca}. These approaches exploit designed
open-system dynamics to encode computational information either in
attractive stationary states or in transient evolution. Importantly,
such dynamics need not be realized by directly engineering a physical
environment: advances in Lindbladian simulation provide digital
implementations and observable-estimation methods on gate-based quantum
computers, under suitable access
to the generator's Hamiltonian and jump
operators~\cite{Kliesch,Sweke,ChildsLi,CleveWang,LiWang,CKBG,DingLiLinSimulation,Peng,YuLiZhaoYuan,KatoWadaItoYamamoto}, with the possibility of fast-forwarding~\cite{ShangQPE,ShangExpFF,GaoJiLiu}. This combination of
dissipative algorithm design and quantum simulation establishes
open-system dynamics as a computational primitive, rather than merely
a source of errors to be corrected.

Linear-system solving is a natural problem on which to develop this approach further. Given an invertible matrix $A$ and a normalized right-hand side $\ket b$, a quantum linear-system algorithm prepares a state proportional to $A^{-1}\ket b$, rather than explicitly outputting all solution entries. Harrow, Hassidim, and Lloyd established this state-preparation formulation~\cite{HHL}. Subsequent algorithms improved conditioning through variable-time amplitude amplification~\cite{Ambainis} and precision through Fourier and polynomial approximations~\cite{CKS,QSVT}. Complementary approaches use randomized eigenpath traversal~\cite{SubasiSommaOrsucci}, optimized adiabatic schedules~\cite{AnLin}, and polynomial eigenstate filtering~\cite{LinTong}; dense-matrix solvers also exploit coherent data access~\cite{WossnigZhaoPrakash}. Discrete adiabatic evolution and augmented-kernel reflection attain $O(\kappa\log(1/\eps))$ query complexity under normalized input access~\cite{Costa,Dalzell}. Queries to the state-preparation oracle were also recently optimized~\cite{LowSu}. These algorithms are primarily organized around coherent inversion or coherent state preparation. The question here is 
\begin{center}
\textit{Is it possible to solve linear systems from dissipation?}
\end{center}

We give a positive answer to this question. We construct a simple purely dissipative Lindbladian which has a unique attractive pure stationary state and dimension-independent worst-case mixing time $O(\kappa^2\log(1/\eps))$, with a matching lower bound for the specified rate-normalized family. The jump operator is based on encoding the solution as a residual kernel. We also show how to run this Lindbladian on quantum computers via the collective block encoding of jump operators and the Lindbladian simulation algorithm in~\cite{CKBG}.

\section{Setup and preliminaries}\label{sec:setup}
\subsection{Linear-system task and notation}
Let $A:\HX\to\HR$ be an invertible $N\times N$ matrix between two identified copies of $\mathbb C^N$, and let $\ket b\in\HR$ be normalized. Assume
\begin{equation}
 \norm A\le 1,\qquad \sigma_{\min}(A)\ge\kappa^{-1},\qquad \kappa\ge 1.
 \label{eq:assumptions}
\end{equation}
For vectors, $\norm{\cdot}$ denotes the Euclidean norm; for operators, it denotes the operator norm, while $\norm{\cdot}_1$ is the trace norm. Define
\begin{equation}
 \uket x=A^{-1}\ket b,\qquad
 s=\norm{\uket x},\qquad
 \ket x=\frac{\uket x}{s}.
 \label{eq:solution}
\end{equation}
Double kets and double bras denote vectors that need not be normalized; ordinary state kets denote normalized states. The bounds on $s$ follow separately from
\begin{equation}
 \begin{aligned}
 1&=\norm{\ket b}=\norm{A\uket x}\le\norm A\,\norm{\uket x}\le s,\\
 s&=\norm{A^{-1}\ket b}\le\norm{A^{-1}}\le\kappa.
 \end{aligned}
\end{equation}
Thus
\begin{equation}
 1\le s\le\kappa.
 \label{eq:solutionnorm}
\end{equation}
The goal is to prepare $\ket x\bra x$ to trace-distance error $\eps$. We use
\begin{equation}
 D(\rho,\sigma)=\frac12\norm{\rho-\sigma}_1.
 \label{eq:distance}
\end{equation}
The parameter $\kappa$ is a known inverse-singular-value bound under the chosen normalization. When $\norm A=1$ and the bound is attained, it is the usual condition number. We do not assume that $A$ is Hermitian or normal.

\subsection{Lindblad dynamics and mixing time}
A density matrix $\rho$ is positive semidefinite with $\Tr\rho=1$. A time-independent Lindblad generator~\cite{Lindblad1976,GKS1976} acts as
\begin{equation}
 \begin{aligned}
 \mathcal M(\rho)={}&-i[H,\rho]+\sum_\mu\left(L_\mu\rho L_\mu^\dagger
       -\frac12\{L_\mu^\dagger L_\mu,\rho\}\right),\\
 \{X,Y\}={}&XY+YX.
 \end{aligned}
 \label{eq:general-lindblad}
\end{equation}
Here $H$ is Hermitian and $L_\mu$ are jump operators on the same system Hilbert space. The equation $\dot\rho_t=\mathcal M(\rho_t)$ generates a completely positive, trace-preserving semigroup $\rho_t=\ee^{t\mathcal M}(\rho_0)$. We call the chosen representation \emph{purely dissipative} when $H=0$.

A stationary state satisfies $\mathcal M(\rho_*)=0$. It is globally attractive if every initial density matrix converges to it. For a stationary pure target $P_*$, define the mixing time
\begin{equation}
 \tmix(\eps)=\inf\left\{t\ge0:
  \sup_{\rho_0}D(\ee^{t\mathcal M}(\rho_0),P_*)\le\eps\right\}.
 \label{eq:mixdefinition}
\end{equation}

\subsection{Input access and resource conventions}
Assume access to a unitary block encoding $U_A$~\cite{LowChuang,QSVT} and a source-preparation unitary $U_b$ such that
\begin{equation}
 (\bra{0^a}\otimes I)U_A(\ket{0^a}\otimes I)=A,
 \qquad U_b\ket0=\ket b,
 \label{eq:oracles}
\end{equation}
including their inverses and controlled versions. An oracle query refers to a call to one of the supplied unitaries. The cost of constructing the oracles from classical data is separate and we refer to~\cite{QSVT,CampsSparse}.

\section{Algorithm construction}\label{sec:construction}
\subsection{Encoding the solution as a residual kernel}
The active system Hilbert space is
\begin{equation}
 \HS=\HX\oplus\spanop\{\ket r\},\qquad \dim\HS=N+1,
 \label{eq:space}
\end{equation}
where $\ket r$ is orthogonal to every solution-coordinate state. We identify $\HX$ with its embedded copy in $\HS$. The map $F:\HS\to\HR$ is defined by
\begin{equation}
 F\bigl(\uket v+\alpha\ket r\bigr)
   =A\uket v-\alpha\ket b.
 \label{eq:Faction}
\end{equation}
In the basis consisting of the solution coordinates followed by $\ket r$, this is $F=(A,-\ket b)$. The residual output space is not a second retained sector of the system; it supplies the labels used by the jump operators.

Since the reference state is orthogonal to the solution coordinates, every system vector has a unique decomposition $\uket v+\alpha\ket r$. The kernel condition is equivalent to
\begin{equation}
 A\uket v=\alpha\ket b
 \quad\Longleftrightarrow\quad
 \uket v=\alpha A^{-1}\ket b=\alpha\uket x.
\end{equation}
Thus every kernel vector is a multiple of $\uket x+\ket r$, and this vector has squared norm $s^2+1$. Consequently
\begin{equation}
 \ker F=\spanop\{\ket\psi\},\qquad
 \ket\psi=\frac{\uket x+\ket r}{\sqrt{s^2+1}}.
 \label{eq:target}
\end{equation}
Set
\begin{equation}
 \begin{gathered}
 P=\ket\psi\bra\psi,\qquad Q=I_S-P,\\
 R=\ket r\bra r,\qquad G=F^\dagger F.
 \end{gathered}
 \label{eq:projectors}
\end{equation}
The target overlap of the reset state is
\begin{equation}
 p=\Tr(PR)=\abs{\braket\psi r}^2=\frac1{s^2+1}.
 \label{eq:p}
\end{equation}
This overlap is positive, but is not assumed to be constant.

\paragraph{Residual gap and rate normalization.}
On the residual space,
\begin{equation}
 FF^\dagger=AA^\dagger+\ket b\bra b.
 \label{eq:FFdagger}
\end{equation}
For every normalized $\ket u\in\HR$,
\begin{equation}
 \begin{gathered}
 \bra uFF^\dagger\ket u
 =\norm{A^\dagger\ket u}^{\,2}+\abs{\braket b u}^{\,2},\\
 \kappa^{-2}\le\bra uFF^\dagger\ket u\le2.
 \end{gathered}
 \label{eq:residual-rayleigh}
\end{equation}
The lower bound uses $\sigma_{\min}(A^\dagger)=\sigma_{\min}(A)\ge\kappa^{-1}$; the upper bound uses $\norm{A^\dagger}\le1$ and $\norm{\ket b}=1$. Hence $FF^\dagger$ is positive definite, $F$ has full row rank, and its $N$ nonzero singular values lie in $[\kappa^{-1},\sqrt2]$. The corresponding eigenvalues of $G=F^\dagger F$ are their squares. Since $G$ is zero on $\ket\psi$ and these eigenvectors span its orthogonal complement,
\begin{equation}
\kappa^{-2}Q\le G\le 2Q.
 \label{eq:Ggap}
\end{equation}
The operator $G$ vanishes exactly on the target. Its bounded norm also fixes the rate normalization: the total jump rate defined below is at most two, independently of $N$ and $\kappa$. 

\subsection{The designed jumps and their stationary state}
Choose a computational basis $\{\ket j\}_{j=1}^N$ of $\HR$. Define operators on the active system by
\begin{equation}
J_j=\ket r\bra jF,\qquad j=1,\ldots,N.
 \label{eq:jumps}
\end{equation}
In coordinates,
\begin{equation}
 J_j=\ket r\left(\sum_{i=1}^N A_{ji}\bra i-b_j\bra r\right),
 \qquad b_j=\braket j b.
 \label{eq:jumprows}
\end{equation}
Thus each jump is built from a known row of the residual map and a known reset destination. 

\paragraph{Purely dissipative generator.}
The generator has no Hamiltonian term:
\begin{align}
 \Lin(\rho)
 &=\sum_{j=1}^N\left(J_j\rho J_j^\dagger
                -\frac12\{J_j^\dagger J_j,\rho\}\right) \notag\\
 &=\Tr(G\rho)R-\frac12\{G,\rho\}.
 \label{eq:lindblad}
\end{align}
To verify the two terms, use $\braket r r=1$ and $\sum_j\ket j\bra j=I_{\mathrm{res}}$:
\begin{align}
 \sum_jJ_j^\dagger J_j
 &=\sum_jF^\dagger\ket j\braket r r\bra jF\notag\\
 &=F^\dagger\left(\sum_j\ket j\bra j\right)F=G,\notag\\
 \sum_jJ_j\rho J_j^\dagger
 &=\sum_j\ket r\bigl(\bra jF\rho F^\dagger\ket j\bigr)\bra r\notag\\
 &=\Tr(F\rho F^\dagger)R=\Tr(G\rho)R.
 \label{eq:collective-identities}
\end{align}
In the second identity each parenthesized factor is a scalar. Cyclicity of the trace gives $\Tr(F\rho F^\dagger)=\Tr(F^\dagger F\rho)$. In particular,
$\Tr\Lin(\rho)=\Tr(G\rho)-\tfrac12\Tr(G\rho)-\tfrac12\Tr(\rho G)=0$.
The dynamics $\rho_t=\ee^{t\Lin}(\rho_0)$ is therefore a completely positive, trace-preserving semigroup. 

Since $F\ket\psi=0$, every jump satisfies $J_j\ket\psi=0$, and $G\ket\psi=F^\dagger F\ket\psi=0$. Hermiticity of $G$ also gives $\bra\psi G=0$. Thus $GP=PG=0$ and $\Tr(GP)=0$; substituting these three identities into \eqref{eq:lindblad} yields
\begin{equation}
 GP=PG=0,\qquad \Lin(P)=0.
 \label{eq:fixedpoint}
\end{equation}
The stationary state is the augmented solution. Uniqueness and attraction from every initial density matrix will be proved in Section~\ref{sec:mixing}.

\subsection{Physical picture and population flow}
For a normalized system state $\ket\phi=\uket v+\alpha\ket r$,
\begin{equation}
 J_j\ket\phi=\bigl(\bra j A\uket v-\alpha\braket j b\bigr)\ket r.
 \label{eq:jumpeffect}
\end{equation}
The amplitude for jump $j$ is a component of the residual. In the usual monitored-jump interpretation, its probability in a short time $\dd t$ is $\norm{J_j\ket\phi}^2\dd t+o(\dd t)$, and the state immediately after any detected jump is $\ket r$. The total jump intensity of a density matrix is
\begin{equation}
 j(t)=\Tr(G\rho_t).
 \label{eq:intensity}
\end{equation}
Monitoring is not required to implement the unconditional state preparation.

Between jumps, an unnormalized state evolves as
\begin{equation}
 \uket{\widetilde\phi_t}=\ee^{-tG/2}\ket{\phi_0}.
 \label{eq:nojump}
\end{equation}
Its norm squared is the probability of the corresponding jump-free interval. This is not a unitary evolution. The anticommutator in \eqref{eq:lindblad} supplies the loss associated with the jump probabilities, while the first term restores the trace by reinjecting population into $R$.

Let
\begin{equation}
 f(t)=\Tr(P\rho_t),\qquad q(t)=1-f(t)=\Tr(Q\rho_t).
 \label{eq:fq}
\end{equation}
Taking the target expectation of \eqref{eq:lindblad} gives
\begin{align}
 f'(t)&=j(t)\Tr(PR)
 -\tfrac12\Tr(PG\rho_t)-\tfrac12\Tr(P\rho_tG)\\
 &=p\,j(t)-\tfrac12\Tr(PG\rho_t)-\tfrac12\Tr(GP\rho_t)
 =p\,j(t).
\end{align}
The last two terms vanish because $PG=GP=0$. Since $G,\rho_t\ge0$, we have $j(t)\ge0$, and consequently
\begin{equation}
 f'(t)=p\,j(t),\qquad q'(t)=-p\,j(t)\le0.
 \label{eq:fidelity}
\end{equation}
Thus target population is nondecreasing in the unconditional density matrix. This is an ensemble statement; the fidelity of a state conditioned on a particular jump record need not be monotone.

\paragraph{Population bookkeeping.}
Let $\ket{v_j}$ be orthonormal non-target eigenvectors of $G$, with
\begin{equation}
 G\ket{v_j}=\lambda_j\ket{v_j},\qquad
 \lambda_j=\sigma_j(F)^2>0.
 \label{eq:modes}
\end{equation}
Write
\begin{equation}
 \ket r=\sqrt p\ket\psi+\sum_{j=1}^N r_j\ket{v_j},
 \qquad \sum_j\abs{r_j}^2=1-p,
 \label{eq:rdecomp}
\end{equation}
and let $a_j(t)=\bra{v_j}\rho_t\ket{v_j}$. The spectral decomposition $G=\sum_j\lambda_j\ket{v_j}\bra{v_j}$ gives $j(t)=\sum_j\lambda_j a_j(t)$. Furthermore,
\begin{align}
 a_j'(t)
 &=j(t)\bra{v_j}R\ket{v_j}
 -\tfrac12\bra{v_j}G\rho_t\ket{v_j}
 -\tfrac12\bra{v_j}\rho_tG\ket{v_j}\\
 &=\abs{r_j}^2j(t)-\tfrac12\lambda_ja_j(t)-\tfrac12\lambda_ja_j(t).
\end{align}
Thus the population equations are
\begin{equation}
\begin{aligned}
 j(t)&=\sum_j\lambda_j a_j(t),\\
 a_j'(t)&=-\lambda_j a_j(t)+\abs{r_j}^2j(t),\\
 f'(t)&=p\,j(t).
 \end{aligned}
 \label{eq:modeflow}
\end{equation}
Population leaves mode $j$ with flux $\lambda_ja_j(t)$. Resets inject target population at rate $pj(t)$ and return the remaining population to the non-target modes with weights $\abs{r_j}^2$ (see also Fig.~\ref{fig:flow}).

\begin{figure}[tbp]
\centering
\resizebox{\linewidth}{!}{%
\begin{tikzpicture}[
 font=\small,
 >={Latex[length=2.6mm,width=1.8mm]},
 block/.style={draw=black!75,rounded corners=2pt,line width=0.7pt,
   fill=white,align=center,minimum height=1.48cm,inner sep=6pt},
 flow/.style={->,line width=0.9pt},
 book/.style={->,densely dashed,line width=0.9pt}]
 \node[block,text width=3.6cm] (modes) at (1.9,0)
   {\textbf{Residual modes}\\[3pt]
    $G\ket{v_j}=\lambda_j\ket{v_j}$\\
    populations $a_j(t)$};
 \node[block,text width=2.5cm,fill=black!3] (reset) at (7.75,0)
   {\textbf{Reset event}\\[3pt]
    prepare $\ket r$\\
    \footnotesize an operation};
 \node[block,text width=3.4cm] (target) at (13.45,0)
   {\textbf{Dark solution state}\\[3pt]
    $P=\ket\psi\bra\psi$\\
    no outgoing jumps};
 \draw[flow] (modes.east) --
   node[above,align=center,font=\footnotesize] {residual detection\\$j(t)=\sum_j\lambda_j a_j(t)$}
   (reset.west);
 \draw[book] (reset.east) --
   node[above,align=center,font=\footnotesize] {target influx\\$p\,j(t)$}
   (target.west);
 \draw[book] (reset.south) -- ++(0,-1.1) -|
   node[pos=0.25,below=3pt,align=center,font=\footnotesize]
   {error reinjection: $(1-p)j(t)$ in total\\mode $j$ receives $|r_j|^2j(t)$}
   (modes.south);
 \node[align=center,font=\footnotesize] at (13.45,-1.55)
   {$f'(t)=p\,j(t)\ge0$\\target population accumulates};
\end{tikzpicture}}
\caption{Population flow for the purely dissipative residual-reset solver. }
\label{fig:flow}
\end{figure}

\subsection{Solution extraction and the algorithm}
Let $E_x$ be the projector from $\HS$ onto the original solution coordinates. At the stationary state,
\begin{equation}
 E_x\ket\psi=\frac{s}{\sqrt{s^2+1}}\ket x,
 \qquad a_*:=\Tr(E_xP)=\frac{s^2}{s^2+1}\ge\frac12.
 \label{eq:extraction}
\end{equation}
A two-outcome measurement of $\{E_x,R\}$ therefore extracts exactly $\ket x$ with constant probability. 

The algorithm is now completely specified. Initialize the system at any initial state, evolve under the fixed generator \eqref{eq:lindblad} for a duration $T$ chosen from the accuracy guarantee, and measure whether the system lies in $\HX$ or in the reference level (only a constant expected number of repetitions is needed).

\section{Complexity analysis}\label{sec:complexity}
We analyze three distinct resources in order: the relaxation time of the fixed generator, the cost of accessing its jumps, and the number of input-oracle queries required to simulate a sufficiently long evolution. The final subsection establishes a matching worst-case mixing lower bound for the generator as written.

\subsection{Mixing-time bound}\label{sec:mixing}
\begin{theorem}[Uniform mixing for the pure-reset generator]\label{thm:mix}
Under \eqref{eq:assumptions}, the unique stationary density matrix of \eqref{eq:lindblad} is $P$. For every initial density matrix $\rho_0$ and every $t\ge0$,
\begin{equation}
 D(\ee^{t\Lin}(\rho_0),P)
 \le\sqrt{\Tr(Q\rho_0)}\,
       2^{-\frac12\lfloor t/(3\kappa^2)\rfloor}.
 \label{eq:mixbound}
\end{equation}
Consequently, for $0<\eps\le1/2$,
\begin{equation}
 \tmix(\eps)\le
 3\kappa^2\left\lceil2\log_2\frac1\eps\right\rceil
 =O\!\left(\kappa^2\log\frac1\eps\right).
 \label{eq:mixingtime}
\end{equation}
\end{theorem}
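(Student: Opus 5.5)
The plan is to prove that the non-target population $q(t)=\Tr(Q\rho_t)$ halves on every time window of length $3\kappa^2$, and then convert this into a trace-distance bound. The conversion is the standard pure-target estimate
\begin{equation*}
 D(\rho,P)\le\sqrt{1-\Tr(P\rho)}=\sqrt{\Tr(Q\rho)},
\end{equation*}
valid because $P$ is a pure state (Fuchs--van de Graaf). Uniqueness of the stationary state then comes for free. If $\rho_*$ is stationary, then $\Tr(Q\rho_*)=\Tr(Q\ee^{t\Lin}\rho_*)\to0$, so $\rho_*=P$.

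The naive route through \eqref{eq:fidelity} is too weak. It gives $q'=-p\,j$ and $j=\Tr(G\rho)\ge\kappa^{-2}q$ by \eqref{eq:Ggap}, hence a rate $p\kappa^{-2}$. Since $p=1/(s^2+1)$ can be of order $\kappa^{-2}$, this only yields mixing in time $\kappa^4$. The reason is that a single reset rarely hits the target, but the reset state $\ket r$ carries large $G$-weight, since $\bra rG\ket r=\norm{F\ket r}^2=\norm{\ket b}^2=1$. Its error component is therefore quickly re-detected. To capture this, I would use a Lyapunov functional
\begin{equation*}
 V(t)=\Tr(Q\rho_t)+\kappa^{-2}\Tr(G^+\rho_t),
\end{equation*}
where $G^+$ is the Moore--Penrose pseudoinverse, supported on $Q$. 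By \eqref{eq:Ggap} we have $0\le G^+\le\kappa^2Q$, so
\begin{equation*}
 q\le V\le 2q.
\end{equation*}

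Next I differentiate $V$ using \eqref{eq:lindblad}. For any $X$ with $XP=PX=0$ that commutes with $G$, we get $\frac{\dd}{\dd t}\Tr(X\rho)=j\,\bra rX\ket r-\Tr(XG\rho)$. Applying this to $X=G^+$, and using $G^+G=Q$, gives
\begin{equation*}
 \frac{\dd}{\dd t}\Tr(G^+\rho)=j\,\bra rG^+\ket r-q.
\end{equation*}
Combined with $q'=-pj$, this yields
\begin{equation*}
 V'=-j\bigl(p-\kappa^{-2}\bra rG^+\ket r\bigr)-\kappa^{-2}q .
\end{equation*}

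The main step is to show $\bra rG^+\ket r\le\kappa^2p$, so that the bracket is nonnegative.
\begin{itemize}
 \item Write $G^+=F^+(F^+)^\dagger$ with $F^+=F^\dagger(FF^\dagger)^{-1}$ (full row rank). Then $(F^+)^\dagger\ket r=(FF^\dagger)^{-1}F\ket r=-(AA^\dagger+\ket b\bra b)^{-1}\ket b$.
 \item By Sherman--Morrison with $M=AA^\dagger$, this vector equals $-M^{-1}\ket b/(1+\bra bM^{-1}\ket b)$.
 \item Since $\bra bM^{-1}\ket b=\norm{A^{-1}\ket b}^2=s^2$, the vector is $-M^{-1}\ket b\,p$.
 \item Also $\norm{M^{-1}\ket b}=\norm{(A^\dagger)^{-1}\uket x}\le\kappa s$.
 \item Hence $\bra rG^+\ket r\le\kappa^2s^2p^2\le\kappa^2p$, using $s^2p\le1$.
\end{itemize}
I expect this identity-chasing to be the crux. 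If it fails, the fallback is to pick a different weight $c$ in front of $G^+$, trading the constant $3$.

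With the bracket nonnegative, $V'\le-\kappa^{-2}q\le-V/(2\kappa^2)$. Integrating gives
\begin{equation*}
 q(T)\le V(T)\le V(0)\,\ee^{-T/(2\kappa^2)}\le 2q(0)\,\ee^{-T/(2\kappa^2)}.
\end{equation*}
At $T=3\kappa^2$ this is $2\ee^{-3/2}q(0)<q(0)/2$. By the semigroup property we can iterate over $\lfloor t/(3\kappa^2)\rfloor$ windows, using that $q$ is nonincreasing (from \eqref{eq:fidelity}) on the leftover fraction. This gives $q(t)\le q(0)\,2^{-\lfloor t/(3\kappa^2)\rfloor}$. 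Taking square roots yields \eqref{eq:mixbound}.

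For \eqref{eq:mixingtime}, use $\Tr(Q\rho_0)\le1$ and choose $\lfloor t/(3\kappa^2)\rfloor\ge\lceil2\log_2(1/\eps)\rceil$. Then the bound in \eqref{eq:mixbound} is at most $\eps$ for every initial state.
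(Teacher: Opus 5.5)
Your proof is correct, and its key ingredients are the paper's: the potential $u=\Tr(G^+\rho_t)$, the identity $u'=j\bra rG^+\ket r-q$, and the estimate $\bra rG^+\ket r\le\kappa^2p$ obtained from the rank-one formula $(FF^\dagger)^{-1}\ket b=(AA^\dagger)^{-1}\ket b/(1+s^2)$. The difference lies only in how you close the argument.

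The paper integrates $u'=\beta f'-q$ over a window $[t,t+T]$. It bounds $\int q$ above by $\kappa^2q(t)+\kappa^2\Delta f$ and below by $Tq(t+T)$, using monotonicity of $q$, and this gives exact halving at $T=3\kappa^2$. You instead combine $q$ and $u$ into the Lyapunov function $V=q+\kappa^{-2}u$, with $q\le V\le 2q$, and apply Gr\"onwall to $V'\le-\kappa^{-2}q\le-V/(2\kappa^2)$. Integrating your inequality $V'\le-\kappa^{-2}q$ over a window reproduces exactly the paper's bound $\int_t^{t+T}q\le u(t)-u(t+T)+\kappa^2\Delta f$. So the two are alternative ways of closing the same estimate.

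Your version gives the continuous-time bound $q(t)\le 2q(0)\ee^{-t/(2\kappa^2)}$ without invoking monotonicity of $q$. It also has a better asymptotic rate: $1/(2\kappa^2)$, versus $\ln 2/(3\kappa^2)$ from halving every $3\kappa^2$. The price is a factor $2$ in the prefactor. That is why you still need the window step $2\ee^{-3/2}<1/2$, plus monotonicity on the leftover interval, to reproduce the theorem's exact form $\sqrt{\Tr(Q\rho_0)}\,2^{-\frac12\lfloor t/(3\kappa^2)\rfloor}$. The paper's window argument delivers that form directly.
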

\begin{proof}
\proofstep{Step 1: define the inverse on the non-target subspace.}
Choose singular vectors of $F$ so that
\begin{equation}
 F=\sum_{j=1}^N\sigma_j\ket{u_j}\bra{v_j},\qquad
 \sigma_j\ge\kappa^{-1}.
 \label{eq:proof-svd}
\end{equation}
The $\ket{u_j}$ form an orthonormal basis of $\HR$. The $\ket{v_j}$ form an orthonormal basis of the subspace orthogonal to $\ket\psi$, so that
\begin{equation}
 Q=\sum_j\ket{v_j}\bra{v_j},\qquad
 G=\sum_j\sigma_j^2\ket{v_j}\bra{v_j}.
\end{equation}
Define the Moore--Penrose inverse by inverting only the nonzero eigenvalues:
\begin{equation}
 G^+:=\sum_j\sigma_j^{-2}\ket{v_j}\bra{v_j},\qquad
 G^+\ket\psi=0.
 \label{eq:Gplus-definition}
\end{equation}
Multiplying the spectral expansions and using $\sigma_j^{-2}\le\kappa^2$ gives
\begin{equation}
 GG^+=G^+G=Q,\qquad 0\le G^+\le\kappa^2Q.
 \label{eq:pinverse}
\end{equation}
We also need an expression involving the ordinary inverse of $FF^\dagger$. Since
$(FF^\dagger)^{-2}=\sum_j\sigma_j^{-4}\ket{u_j}\bra{u_j}$, we have
\begin{align}
 F^\dagger(FF^\dagger)^{-2}F
 &=\sum_j\sigma_j\sigma_j^{-4}\sigma_j\ket{v_j}\bra{v_j}\notag\\
 &=\sum_j\sigma_j^{-2}\ket{v_j}\bra{v_j}=G^+.
 \label{eq:Gplus-viaF}
\end{align}
Here $G$ is singular, whereas $FF^\dagger$ is invertible. The two inverses in these formulas therefore have different meanings.

\proofstep{Step 2: calculate the inverse-weighted reset overlap.}
Set $B=AA^\dagger$, a positive-definite operator on $\HR$. First,
\begin{equation}
 B^{-1}=(A^{-1})^\dagger A^{-1},\qquad
 \bra bB^{-1}\ket b=\norm{A^{-1}\ket b}^{\,2}=s^2.
 \label{eq:Bnormidentity}
\end{equation}
To find $(FF^\dagger)^{-1}\ket b$, multiply $B^{-1}\ket b$ by $FF^\dagger=B+\ket b\bra b$:
\begin{align}
 (FF^\dagger)B^{-1}\ket b
 &=(B+\ket b\bra b)B^{-1}\ket b\notag\\
 &=\ket b+\ket b\bigl(\bra bB^{-1}\ket b\bigr)
 =(1+s^2)\ket b.
 \label{eq:rankone-multiplication}
\end{align}
Invertibility of $FF^\dagger$ therefore implies
\begin{equation}
 (FF^\dagger)^{-1}\ket b=\frac{B^{-1}\ket b}{1+s^2}.
 \label{eq:rankone}
\end{equation}
This is an identity for the action on $\ket b$, not an equality of the two inverse operators up to a scalar.

Next, use \eqref{eq:Gplus-viaF} and $F\ket r=-\ket b$. The two minus signs cancel, giving
\begin{align}
 \bra rG^+\ket r
 &=\bra rF^\dagger(FF^\dagger)^{-2}F\ket r\notag\\
 &=\bra b(FF^\dagger)^{-2}\ket b
 =\norm{(FF^\dagger)^{-1}\ket b}^{\,2}\notag\\
 &=\frac{\norm{B^{-1}\ket b}^{\,2}}{(1+s^2)^2}
 =\frac{\bra bB^{-2}\ket b}{(1+s^2)^2}.
 \label{eq:reset-inverse-overlap}
\end{align}
Both squared-norm equalities use Hermiticity of the corresponding positive inverse. In particular, the squared denominator comes from taking the norm squared in \eqref{eq:rankone}.

Define
\begin{equation}
 \beta:=\frac{\bra rG^+\ket r}{p}
 =\frac{\bra bB^{-2}\ket b}{1+s^2},
 \label{eq:beta-definition}
\end{equation}
since $p=(1+s^2)^{-1}$. Every eigenvalue $\mu$ of $B$ satisfies $\mu\ge\kappa^{-2}$, and hence $\mu^{-2}\le\kappa^2\mu^{-1}$. Applying this inequality in an eigenbasis of $B$ proves $B^{-2}\le\kappa^2B^{-1}$. Therefore
\begin{equation}
 \beta
 \le\frac{\kappa^2\bra bB^{-1}\ket b}{1+s^2}
 =\kappa^2\frac{s^2}{1+s^2}\le\kappa^2.
 \label{eq:beta}
\end{equation}

\proofstep{Step 3: derive two scalar evolution equations.}
For $\rho_t=\ee^{t\Lin}(\rho_0)$, use the populations and jump intensity from Section~\ref{sec:construction}, and introduce one additional expectation:
\begin{equation}
 \begin{aligned}
 f(t)&=\Tr(P\rho_t),& q(t)&=1-f(t),\\
 j(t)&=\Tr(G\rho_t),& u(t)&=\Tr(G^+\rho_t).
 \end{aligned}
 \label{eq:proof-scalars}
\end{equation}
Equation~\eqref{eq:fidelity} gives $f'(t)=p\,j(t)\ge0$, so $q$ is nonincreasing. Differentiating $u$, inserting the generator, and cyclically moving the last $G$ inside the trace gives
\begin{align}
 u'(t)
 &=j(t)\Tr(G^+R)
   -\tfrac12\Tr(G^+G\rho_t)-\tfrac12\Tr(G^+\rho_tG)\notag\\
 &=j(t)\bra rG^+\ket r
   -\tfrac12\Tr\bigl((G^+G+GG^+)\rho_t\bigr)\notag\\
 &=p\beta j(t)-\Tr(Q\rho_t)
 =\beta f'(t)-q(t).
 \label{eq:u-derivative}
\end{align}
Taking expectations of $0\le G^+\le\kappa^2Q$ yields
\begin{equation}
 u'(t)=\beta f'(t)-q(t),\qquad
 0\le u(t)\le\kappa^2q(t).
 \label{eq:bookkeeping}
\end{equation}

\proofstep{Step 4: bound the fidelity gain over a finite interval.}
Fix a starting time $t\ge0$ and an interval length $T>0$. Write
$\Delta f=f(t+T)-f(t)\ge0$, so $q(t+T)=q(t)-\Delta f$.
Integrating \eqref{eq:u-derivative} and rearranging gives
\begin{align}
 \int_t^{t+T}q(v)\,\dd v
 &=u(t)-u(t+T)+\beta\Delta f\notag\\
 &\le\kappa^2q(t)+\kappa^2\Delta f.
 \label{eq:integrated-proof}
\end{align}
The inequality uses $u(t)\le\kappa^2q(t)$, $u(t+T)\ge0$, and $\beta\le\kappa^2$ separately. On the same interval, monotonicity gives $q(v)\ge q(t+T)$, and hence
\begin{equation}
 \int_t^{t+T}q(v)\,\dd v
 \ge Tq(t+T)=T[\,q(t)-\Delta f\,].
 \label{eq:integral-lower}
\end{equation}
Combining the two bounds and collecting the terms containing $\Delta f$ gives
\begin{equation}
 (T-\kappa^2)q(t)\le(T+\kappa^2)\Delta f.
\end{equation}
Thus, for $T\ge\kappa^2$,
\begin{equation}
 f(t+T)-f(t)\ge
 \frac{T-\kappa^2}{T+\kappa^2}[1-f(t)].
 \label{eq:finitegain}
\end{equation}
At $T=3\kappa^2$ the fraction is $1/2$. Equivalently,
\begin{equation}
 q(t+3\kappa^2)\le\frac12q(t).
 \label{eq:halving}
\end{equation}

\proofstep{Step 5: iterate and convert population error to trace distance.}
For any $t\ge0$, let $n=\lfloor t/(3\kappa^2)\rfloor$. Repeated application of \eqref{eq:halving}, followed by monotonicity on the remaining part of the interval, gives
\begin{equation}
 q(t)\le q(3n\kappa^2)\le2^{-n}q(0).
 \label{eq:qbound}
\end{equation}
Based on the relations
\begin{equation}
 1-\Tr(P_*\rho)\le D(\rho,P_*)
  \le\sqrt{1-\Tr(P_*\rho)},
 \label{eq:pure-distance}
\end{equation}
we obtain
\begin{equation}
 D(\rho_t,P)\le\sqrt{q(t)}
 \le\sqrt{q(0)}\,2^{-n/2},
\end{equation}
which is \eqref{eq:mixbound}. Since $q(0)\le1$, choosing $n\ge\lceil2\log_2(1/\eps)\rceil$ proves \eqref{eq:mixingtime}.
Finally, $P$ is stationary by \eqref{eq:fixedpoint}. If another density matrix $\rho_*$ were stationary, then $\ee^{t\Lin}(\rho_*)=\rho_*$ for all $t$. Applying \eqref{eq:mixbound} to $\rho_*$ and sending $t\to\infty$ would give $D(\rho_*,P)=0$. Thus $\rho_*=P$, proving uniqueness.
\end{proof}

\subsection{Operational construction of the jump operators}\label{sec:implementation}
We first construct the residual block encoding from $U_A$ and $U_b$, and then convert it into collective jump access. All sector tests and basis-state swaps below concern known labels. No projector onto the unknown solution is used.

\paragraph{Residual block encoding from the input oracles.}
Identify $\HR$ and $\HX$ by their computational bases, and let $\iota:\HX\to\HS$ be the embedding. 
\begin{equation}
 \begin{aligned}
 \iota&=\sum_{j=1}^N\ket j_S\bra j_x
       =\begin{pmatrix}I_N\\0_{1\times N}\end{pmatrix},\\
 \iota^\dagger&=\begin{pmatrix}I_N&0_{N\times1}\end{pmatrix},\qquad
 \iota^\dagger\iota=I_x.
 \end{aligned}
 \label{eq:embedding}
\end{equation}
In the ordered system basis $\{\ket1,\ldots,\ket N,\ket r\}$, define
\begin{equation}
 \begin{aligned}
  \bar A&=\iota A\iota^\dagger
  =\begin{pmatrix}A&0_{N\times1}\\0_{1\times N}&0\end{pmatrix},
  \\[1ex]
  \ket{\bar b}&=\iota\ket b=\begin{pmatrix}\ket b\\0\end{pmatrix},\\[1ex]
  \widehat F&=\iota F=\bar A-\ket{\bar b}\bra r
  =\begin{pmatrix}A&-\ket b\\0_{1\times N}&0\end{pmatrix}.
 \end{aligned}
 \label{eq:paddedF}
\end{equation}
Here $\bar A$ acts as $A$ on the solution sector and as zero on $\ket r$, while $\ket{\bar b}$ is $\ket b$ with a zero reference coordinate appended; the bars denote these embeddings, not complex conjugation. Thus $\widehat F$ is a square operator on the augmented register whose output has zero reference component. 

\begin{proposition}[Explicit residual block encoding]\label{prop:Fencoding}
Assume the input oracles satisfy
\begin{equation}
 \begin{gathered}
 (\bra{0^a}\otimes I_x)U_A(\ket{0^a}\otimes I_x)=A,\\
 U_b\ket0=\ket b.
 \end{gathered}
\end{equation}
Use a selector qubit $c$, a flag qubit $f$, and the $a$ matrix-oracle signal qubits, so that $K=(c,f,a)$ contains $m=a+2$ signal qubits. Let $R=\ket r\bra r$, $E_R=I_S-R$, and let $T_{r,0_x}$ exchange the two known basis states $\ket r$ and $\ket{0_x}=\iota\ket0$, leaving their orthogonal complement unchanged. Define
\begin{equation}
 \begin{aligned}
 \widetilde U_A&=
 \begin{pmatrix}U_A&0\\0&I_a\end{pmatrix},
 \\[1ex]
 W_b&=
 \begin{pmatrix}U_b&0\\0&1\end{pmatrix}T_{r,0_x},
 \\[1ex]
 C_{fS}&=I_f\otimes E_R+X_f\otimes R.
 \end{aligned}
 \label{eq:UF-gates}
\end{equation}
The displayed blocks separate the solution and reference sectors; $\widetilde U_A$ also includes the oracle signal register. Thus $\widetilde U_A$ applies $U_A$ in the solution sector and acts as the identity in the reference sector. The known gate $C_{fS}$ flips $f$ exactly on the reference level. With identities on untouched registers implicit, set
\begin{equation}
\begin{aligned}
 U_F={}&(H_c\otimes I)\\[-1mm]
 &\times\Bigl[\ket0\bra0_c\otimes(I_f\otimes\widetilde U_A)\\[-1mm]
 &\qquad-\ket1\bra1_c\otimes(X_f\otimes I_a\otimes W_b)\Bigr]\\[-1mm]
 &\times(H_c\otimes I)\,C_{fS},
 \end{aligned}
 \label{eq:select}
\end{equation}
where $H_c$ is the Hadamard gate and $X_f$ is the Pauli $X$ gate. Then $U_F$ is unitary and
\begin{equation}
 \begin{aligned}
 &(\bra{0^m}_K\otimes I_S)U_F(\ket{0^m}_K\otimes I_S)\\
 &\qquad=\frac{\widehat F}{2}.
 \end{aligned}
 \label{eq:Fblock}
\end{equation}
One call to $U_F$, its inverse, or a controlled version uses $O(1)$ queries to $U_A,U_b$ and their inverses.
\end{proposition}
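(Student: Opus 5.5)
The plan is to verify the claim directly, treating $U_F$ as a two-term linear combination of unitaries. After that we check unitarity and count queries.

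\emph{Unitarity.} Each factor in \eqref{eq:select} is unitary, so $U_F$ is unitary. The factor $C_{fS}$ equals $I_f\otimes E_R+X_f\otimes R$ with $E_R+R=I_S$, so it is a controlled-$X_f$ gate conditioned on the known projector $R$, and hence unitary. $\widetilde U_A$ is block diagonal with unitary blocks. $W_b$ is a product of a block-diagonal unitary and a basis-state transposition. The bracketed middle operator is $\ket0\bra0_c\otimes V_0+\ket1\bra1_c\otimes(-V_1)$, with $V_0=I_f\otimes\widetilde U_A$ and $V_1=X_f\otimes I_a\otimes W_b$ both unitary, so it is a select-unitary. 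Conjugating the selector by $H_c$ and projecting onto $\ket0_c$ then gives $\frac12(V_0-V_1)C_{fS}$. It therefore suffices to show $(\bra{0_f0^a}\otimes I_S)(V_0-V_1)C_{fS}(\ket{0_f0^a}\otimes I_S)=\widehat F$.

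\emph{Block computation.} Take an arbitrary input $\ket\phi=\uket v+\alpha\ket r$ with $\uket v\in\iota\HX$. First apply $C_{fS}$ to $\ket{0_f}\ket{0^a}\ket\phi$. It produces $\ket{0_f}\ket{0^a}\uket v+\alpha\ket{1_f}\ket{0^a}\ket r$, since $E_R\ket\phi=\uket v$ and $R\ket\phi=\alpha\ket r$.

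For the $V_0$ term, $V_0$ leaves $f$ unchanged, so the branch with $f=1$ is annihilated by $\bra{0_f}$. On the branch with $f=0$, $\widetilde U_A$ acts as $U_A$ on $(a,x)$ because $\uket v$ has no reference component. Projecting onto $\bra{0^a}$ then gives $\iota A\iota^\dagger\uket v=\bar A\ket\phi$, using \eqref{eq:oracles}.

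For the $V_1$ term, $X_f$ swaps the two branches. The branch that was $f=0$ becomes $f=1$ and is annihilated. The branch $\alpha\ket{1_f}\ket{0^a}\ket r$ becomes $f=0$ with $a$ untouched. Acting on $\ket r$, $W_b$ first maps it to $\ket{0_x}$ via $T_{r,0_x}$ and then applies $U_b$ in the solution sector, giving $\iota\ket b=\ket{\bar b}$. This contributes $\alpha\ket{\bar b}=\ket{\bar b}\braket r\phi$.

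Subtracting the two contributions yields $(\bar A-\ket{\bar b}\bra r)\ket\phi=\widehat F\ket\phi$, and with the factor $\frac12$ this is \eqref{eq:Fblock}.

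\emph{Queries and main obstacle.} $U_F$ makes one controlled call to $U_A$ and one to $U_b$. These are made sector-controlled by the reference-level test, which concerns a known label. The inverse and controlled versions of $U_F$ use the same counts with inverses, so each use costs $O(1)$ queries. The only delicate point is register bookkeeping. One must confirm that $\widetilde U_A$ acting on the reference sector cannot leak amplitude into the block, and that the $X_f$ flag cleanly separates the two branches so that no cross terms survive the projection onto $\ket{0_f}$. I expect to handle this by writing $\widetilde U_A$ and $W_b$ explicitly with respect to the orthogonal decomposition $\HS=\iota\HX\oplus\spanop\{\ket r\}$ and checking each branch separately, as above.
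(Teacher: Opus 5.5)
Your proposal is correct and follows essentially the same route as the paper. The paper tracks the state factor by factor through $C_{fS}$, the Hadamards and the select operator, keeping only the $f=0$ branch of $\widetilde U_A$ acting on $E_R\ket\phi$ and the flipped branch of $W_b$ acting on $R\ket\phi$; your linear-combination-of-unitaries form $\tfrac12(V_0-V_1)C_{fS}$ is a more compact version of the same computation, and your unitarity and query-count arguments match the paper's.
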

\begin{proof}
Let $\ket\phi_S$ be arbitrary, and write
$\ket{uv}_{cf}:=\ket u_c\otimes\ket v_f$.
We apply the factors of $U_F$ in \eqref{eq:select} from right to left,
starting with
\begin{equation}
 \ket{0^m}_K\otimes\ket\phi_S
 =
 \ket{00}_{cf}\otimes\ket{0^a}_a\otimes\ket\phi_S.
\end{equation}

First, applying
$C_{fS}=I_f\otimes E_R+X_f\otimes R$ gives
\begin{equation}
 \begin{aligned}
 &C_{fS}\bigl(\ket{0^m}_K\otimes\ket\phi_S\bigr)\\
 &\quad=
 \ket{00}_{cf}\otimes\ket{0^a}_a\otimes E_R\ket\phi_S
 +
 \ket{01}_{cf}\otimes\ket{0^a}_a\otimes R\ket\phi_S.
 \end{aligned}
\end{equation}

Next, the first Hadamard on $c$ transforms this state into
\begin{equation}
 \frac{1}{\sqrt2}
 \begin{aligned}[t]
 \Bigl[
 &\bigl(\ket{00}_{cf}+\ket{10}_{cf}\bigr)
   \otimes\ket{0^a}_a\otimes E_R\ket\phi_S\\
 +&\bigl(\ket{01}_{cf}+\ket{11}_{cf}\bigr)
   \otimes\ket{0^a}_a\otimes R\ket\phi_S
 \Bigr].
 \end{aligned}
\end{equation}

The selector-controlled operator applies $\widetilde U_A$ when $c=0$,
and applies $-X_f\otimes I_a\otimes W_b$ when $c=1$.
The resulting state is therefore
\begin{equation}
 \frac{1}{\sqrt2}
 \begin{aligned}[t]
 \Bigl[
 &\ket{00}_{cf}\otimes
   \widetilde U_A
   \bigl(\ket{0^a}_a\otimes E_R\ket\phi_S\bigr)\\
 +&\ket{01}_{cf}\otimes
   \widetilde U_A
   \bigl(\ket{0^a}_a\otimes R\ket\phi_S\bigr)\\
 -&\ket{11}_{cf}\otimes\ket{0^a}_a
   \otimes W_bE_R\ket\phi_S\\
 -&\ket{10}_{cf}\otimes\ket{0^a}_a
   \otimes W_bR\ket\phi_S
 \Bigr].
 \end{aligned}
\end{equation}

Now apply the final Hadamard and project $c,f$ onto $\ket{00}_{cf}$.
Since
\begin{equation}
 \bra{00}_{cf}(H_c\otimes I_f)
 =
 \frac{\bra{00}_{cf}+\bra{10}_{cf}}{\sqrt2},
\end{equation}
only the first and fourth terms contribute. Hence
\begin{equation}
 \begin{aligned}
 &(\bra{00}_{cf}\otimes I_{aS})
 U_F\bigl(\ket{0^m}_K\otimes\ket\phi_S\bigr)\\
 &\quad=
 \frac12\Bigl[
 \widetilde U_A
 \bigl(\ket{0^a}_a\otimes E_R\ket\phi_S\bigr)
 -
 \ket{0^a}_a\otimes W_bR\ket\phi_S
 \Bigr].
 \end{aligned}
\end{equation}

Finally, project the matrix-oracle signal register onto $\ket{0^a}_a$.
The definition of $\widetilde U_A$ gives
\begin{equation}
 \begin{aligned}
 &(\bra{0^a}_a\otimes I_S)\widetilde U_A
   (\ket{0^a}_a\otimes I_S)\\
 &\qquad=
 \begin{pmatrix}
 A&0\\
 0&1
 \end{pmatrix}
 =\bar A+R.
 \end{aligned}
\end{equation}
For the source term,
\begin{align}
 W_bR
 &=
 \begin{pmatrix}
 U_b&0\\
 0&1
 \end{pmatrix}
 T_{r,0_x}\ket r\bra r
 \notag\\
 &=
 \begin{pmatrix}
 U_b&0\\
 0&1
 \end{pmatrix}
 \ket{0_x}\bra r
 \notag\\
 &=\ket{\bar b}\bra r.
\end{align}
Consequently,
\begin{align}
 &(\bra{0^m}_K\otimes I_S)
 U_F\bigl(\ket{0^m}_K\otimes\ket\phi_S\bigr)
 \notag\\
 &\qquad=
 \frac12\Bigl[(\bar A+R)E_R-W_bR\Bigr]\ket\phi_S
 \notag\\
 &\qquad=
 \frac12\Bigl[\bar A-\ket{\bar b}\bra r\Bigr]\ket\phi_S
 \notag\\
 &\qquad=
 \frac12
 \begin{pmatrix}
 A&-\ket b\\
 0&0
 \end{pmatrix}
 \ket\phi_S
 \notag\\
 &\qquad=\frac{\widehat F}{2}\ket\phi_S,
\end{align}
where we used $RE_R=0$ and $\bar A E_R=\bar A$.
Since $\ket\phi_S$ is arbitrary, this proves
\begin{equation}
 (\bra{0^m}_K\otimes I_S)
 U_F
 (\ket{0^m}_K\otimes I_S)
 =
 \frac{\widehat F}{2}.
\end{equation}

All factors of $U_F$ are unitary.
Indeed, $\widetilde U_A$ is a direct sum of unitaries, and $W_b$
is a product of unitaries. The orthogonal-projector identities
$E_RR=0$ and $E_R+R=I_S$ imply
$C_{fS}^{\dagger}C_{fS}=I$.
The selector-controlled operator is unitary on each of its two
orthogonal selector subspaces, and the two Hadamards are unitary.
Thus their product $U_F$ is unitary.

The circuit uses one controlled query to $U_A$ and one controlled
query to $U_b$. All remaining operations act on known flags or basis
labels. Reversing the circuit or adding an external control therefore
preserves the $O(1)$ input-query count.
\end{proof}

\paragraph{Collective jump encoding by a register SWAP.}
The collective jump map and its padded version are
\begin{equation}
 \begin{gathered}
 V_J=\sum_{j=1}^N\ket j_E\otimes J_j:
 \HS\longrightarrow\mathcal H_{\mathrm{res},E}\otimes\HS,\\
 \widehat V_J=(\iota_E\otimes I_S)V_J,
 \end{gathered}
 \label{eq:collective}
\end{equation}
where $E$ in the padded implementation is a copy of the augmented system register. 

\begin{proposition}[SWAP construction of collective jump access]\label{prop:swapencoding}
Let $U_F$ satisfy \eqref{eq:Fblock}, initialize $E$ in $\ket r_E$, and set
\begin{equation}
 U_J=(I_K\otimes\operatorname{SWAP}_{ES})\,U_{F,KS},
 \label{eq:UJ}
\end{equation}
where $U_{F,KS}$ acts trivially on $E$. Then
\begin{equation}
\begin{aligned}
 &(\bra{0^m}_K\otimes I_{ES})\,U_J\\
 &\quad\times(\ket{0^m}_K\otimes\ket r_E\otimes I_S)
 =\frac{\widehat V_J}{2}.
 \end{aligned}
 \label{eq:VJblock}
\end{equation}
\end{proposition}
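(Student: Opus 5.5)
The plan is to verify \eqref{eq:VJblock} by computing the action on an arbitrary system vector $\ket\phi_S$, mirroring the proof of Proposition~\ref{prop:Fencoding}. The construction is block-diagonal in the $K$ register in the relevant sense: $U_F$ acts on $K\otimes S$ only, and $\operatorname{SWAP}_{ES}$ acts on $E\otimes S$ only, commuting with the projection onto $\bra{0^m}_K$. Hence the projection can be moved past the SWAP, giving
\begin{equation*}
 (\bra{0^m}_K\otimes I_{ES})U_J(\ket{0^m}_K\otimes\ket r_E\otimes\ket\phi_S)
 =\operatorname{SWAP}_{ES}\Bigl(\ket r_E\otimes\tfrac12\widehat F\ket\phi_S\Bigr),
\end{equation*}
where I invoke \eqref{eq:Fblock} for the $K S$ block.

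Next I apply the SWAP. Since $E$ is a copy of the augmented register, the SWAP exchanges the two factors, producing $\tfrac12\,\widehat F\ket\phi_S$ placed in register $E$ and $\ket r$ placed in register $S$. Expanding $\widehat F\ket\phi=\iota F\ket\phi=\sum_{j=1}^N \iota\ket j\,\bra jF\ket\phi$ in the computational basis of $\HR$ gives
\begin{equation*}
 \tfrac12\sum_{j}(\iota_E\ket j_E)\otimes\ket r_S\,\bra jF\ket\phi
 =\tfrac12\sum_j(\iota_E\ket j_E)\otimes J_j\ket\phi,
\end{equation*}
using $J_j=\ket r\bra jF$ from \eqref{eq:jumps}. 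This equals $\tfrac12(\iota_E\otimes I_S)V_J\ket\phi=\tfrac12\widehat V_J\ket\phi$ by \eqref{eq:collective}. Since $\ket\phi_S$ is arbitrary, linearity gives the operator identity.

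The only delicate step is bookkeeping of register identifications. I have to check that the SWAP interpretation is consistent: $\widehat F$ outputs vectors with zero reference component, so they lie in $\iota(\HR)\subset E$, matching the padding $\iota_E$. The $N$ labels $\ket j$ of $\HR$ then sit as the first $N$ basis states of $E$. Unitarity of $U_J$ is immediate, since it is a product of unitaries. The query count is inherited from Proposition~\ref{prop:Fencoding}, because the SWAP and the preparation of $\ket r_E$ are input-independent.
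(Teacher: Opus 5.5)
Your proposal is correct and follows essentially the same route as the paper. Both arguments use that $\operatorname{SWAP}_{ES}$ does not touch $K$, apply \eqref{eq:Fblock} to obtain $\ket r_E\otimes\tfrac12\widehat F\ket\phi_S$, swap, and identify $(\widehat F\ket\phi)_E\otimes\ket r_S$ with $\widehat V_J\ket\phi$ via $J_j=\ket r\bra jF$; the paper merely writes out the unnormalized failure branch explicitly rather than commuting the zero-signal projector past the SWAP in one line.
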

\begin{proof}
\proofstep{Step 1: identify the desired joint output.}
Fix a normalized system input $\ket\phi$ and define its residual components
\begin{equation}
 c_j:=\bra jF\ket\phi,\qquad
 F\ket\phi=\sum_{j=1}^N c_j\ket j.
 \label{eq:residual-components}
\end{equation}
Each $c_j$ is a scalar. From $J_j=\ket r\bra jF$ we have $J_j\ket\phi=c_j\ket r$, and hence
\begin{align}
 \widehat V_J\ket\phi
 &=\sum_{j=1}^N\ket j_E\otimes(J_j\ket\phi)_S\notag\\
 &=\sum_{j=1}^N\ket j_E\otimes(c_j\ket r_S)\notag\\
 &=\left(\sum_{j=1}^N c_j\ket j_E\right)\otimes\ket r_S\notag\\
 &=(\widehat F\ket\phi)_E\otimes\ket r_S.
 \label{eq:VJ-identity}
\end{align}
The residual basis in $E$ is embedded by $\iota_E$; the additional reference coordinate of $\widehat F\ket\phi$ is zero. Factoring out the common system ket $\ket r_S$ is the only step relating the collective-jump notation to the residual-output notation.

\proofstep{Step 2: apply the residual encoding and isolate its signal component.}
Initialize the registers in the order $K,E,S$ as
$\ket{0^m}_K\otimes\ket r_E\otimes\ket\phi_S$.
Proposition~\ref{prop:Fencoding} states that projection of the $U_F$ output onto the zero-signal ancillas gives $\widehat F\ket\phi/2$. Equivalently, the complete unitary output decomposes as
\begin{equation}
 \begin{split}
 &U_{F,KS}(\ket{0^m}_K\otimes\ket r_E\otimes\ket\phi_S)\\
 &\quad=\tfrac12\ket{0^m}_K\otimes\ket r_E
       \otimes(\widehat F\ket\phi)_S\\
 &\qquad+\uket{\mathrm{fail}_\phi},
 \end{split}
 \label{eq:UF-branch}
\end{equation}
where the generally unnormalized remainder satisfies
\begin{equation}
 (\bra{0^m}_K\otimes I_{ES})\uket{\mathrm{fail}_\phi}=0.
 \label{eq:failure-orthogonality}
\end{equation}

\proofstep{Step 3: exchange the system and environment registers.}
The registers $S$ and $E$ have equal padded dimensions, so their SWAP is unitary. It maps
$\ket r_E\otimes(\widehat F\ket\phi)_S$ to
$(\widehat F\ket\phi)_E\otimes\ket r_S$.
Consequently,
\begin{equation}
 \begin{split}
 &U_J(\ket{0^m}_K\otimes\ket r_E\otimes\ket\phi_S)\\
 &\quad=\tfrac12\ket{0^m}_K\otimes(\widehat F\ket\phi)_E
       \otimes\ket r_S\\
 &\qquad+\uket{\mathrm{fail}'_\phi}.
 \end{split}
 \label{eq:UJ-branch}
\end{equation}
Since SWAP does not act on $K$, it commutes with the zero-signal projector and preserves \eqref{eq:failure-orthogonality}. The first term, by \eqref{eq:VJ-identity}, is exactly
\begin{equation}
 \tfrac12\ket{0^m}_K\otimes(\widehat V_J\ket\phi)_{ES}.
\end{equation}
Projecting the signal register onto zero therefore gives $\widehat V_J\ket\phi/2$. This holds for every $\ket\phi$, proving the operator identity \eqref{eq:VJblock}.

Finally, $U_J$ consists of one $U_F$ call and a register SWAP. The inverse applies SWAP first and then $U_F^\dagger$. Both, and their controlled versions, use $O(1)$ input queries. 
\end{proof}

\subsection{From mixing time to query complexity}\label{sec:queries}
The collective-encoding simulation theorem of Chen, Kastoryano, Brand\~ao, and Gily\'en~\cite[Theorem III.2 and Appendix F]{CKBG} simulates a unit-normalized purely dissipative generator for time $T$ to diamond-norm error $\eta$ using
\begin{equation}
 O\!\left((T+1)\frac{\log((T+1)/\eta)}{\log\log((T+1)/\eta)}\right)
 \label{eq:simulation}
\end{equation}
controlled queries to its collective encoding and inverse. The theorem is used as a simulation subroutine; we do not reprove it here. 

For our input oracle, the encoded jumps are $\overline J_j=J_j/2$. Every dissipative term is quadratic in its jump, so their generator is
\begin{equation}
 \begin{aligned}
 \overline\Lin(\rho)
 &=\sum_j\left(\overline J_j\rho\overline J_j^\dagger
 -\tfrac12\{\overline J_j^\dagger\overline J_j,\rho\}\right)=\tfrac14\Lin(\rho).
 \end{aligned}
 \label{eq:normalized-generator}
\end{equation}
To realize the physical evolution time $T$, the normalized simulation time is $4T$, a constant factor.

\begin{corollary}[Heralded solution preparation]\label{cor:queries}
For $0<\eps<1/4$, under \eqref{eq:oracles}, the pure-reset solver prepares a heralded output within trace distance $\eps$ of $\ket x\bra x$, with constant success probability and expected input-query complexity
\begin{equation}
 O\!\left(\kappa^2\log(1/\eps)
             \frac{\log(\kappa/\eps)}{\log\log(\kappa/\eps)}\right).
 \label{eq:queries}
\end{equation}
\end{corollary}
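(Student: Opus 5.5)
The plan is to combine Theorem~\ref{thm:mix}, Propositions~\ref{prop:Fencoding} and \ref{prop:swapencoding}, and the simulation bound \eqref{eq:simulation}, then add a postselection step. First I will fix the evolution time. Apply Theorem~\ref{thm:mix} with accuracy $\eps_1=\eps/8$ to get
\begin{equation*}
T=3\kappa^2\lceil 2\log_2(8/\eps)\rceil=O(\kappa^2\log(1/\eps)).
\end{equation*}
Then for any initial state $D(\ee^{T\Lin}(\rho_0),P)\le\eps/8$. By \eqref{eq:normalized-generator}, the encoded generator is $\overline\Lin=\Lin/4$, so we simulate $\overline\Lin$ for time $4T$.

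The collective encoding of $\overline\Lin$ is $U_J$ from Proposition~\ref{prop:swapencoding}, which block-encodes $\widehat V_J/2$. I need to check that this is the collective jump access in the form required by \cite[Theorem III.2]{CKBG}. It is: the padding $\iota_E$ only adds an environment label carrying zero amplitude, so the induced generator is unchanged. I also need the normalization $\norm{\sum_j\overline J_j^\dagger\overline J_j}=\norm{G}/4\le 1/2\le1$, which holds by \eqref{eq:Ggap}.

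Invoking \eqref{eq:simulation} with time $4T$ and diamond error $\eta=\eps/8$ gives a channel $\widetilde{\mathcal E}$ with $D(\widetilde{\mathcal E}(\rho_0),P)\le\eps/4$. It uses $O\bigl((T+1)\log((T+1)/\eta)/\log\log((T+1)/\eta)\bigr)$ calls to $U_J$. Each call costs $O(1)$ queries to $U_A,U_b$ by Propositions~\ref{prop:Fencoding} and \ref{prop:swapencoding}. Since $\log(T/\eta)=O(\log\kappa+\log\log(1/\eps)+\log(1/\eps))=O(\log(\kappa/\eps))$, and $x\mapsto\log x/\log\log x$ is increasing for large $x$, one run costs
\begin{equation*}
O\!\left(\kappa^2\log(1/\eps)\,\frac{\log(\kappa/\eps)}{\log\log(\kappa/\eps)}\right)
\end{equation*}
queries. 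Replacing $\log(T/\eta)$ by $\log(\kappa/\eps)$ inside the ratio is the one routine point that needs care, but it is only a constant-factor change.

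Next I will handle the heralding, which I expect to be the main obstacle because postselection can amplify error. Measure $\{E_x,R\}$ on $\sigma=\widetilde{\mathcal E}(\rho_0)$ and accept on outcome $E_x$.
\begin{itemize}
\item \emph{Success probability.} Since $|\Tr(E_x\sigma)-a_*|\le D(\sigma,P)\le\eps/4<1/16$ and $a_*\ge1/2$ by \eqref{eq:extraction}, the success probability is at least $7/16$.
\item \emph{Error after postselection.} The exact conditional state from $P$ is $\ket x\bra x$. For the perturbed state, use the standard bound
\begin{equation*}
\Bigl\|\frac{E_x\sigma E_x}{\Tr E_x\sigma}-\frac{E_xPE_x}{\Tr E_xP}\Bigr\|_1\le \frac{2\norm{\sigma-P}_1}{\Tr(E_xP)}.
\end{equation*}
This follows from the triangle inequality after splitting off the normalization change. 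It gives conditional trace distance at most $2\cdot(\eps/4)\cdot 2/ (1)\cdot\frac12\cdot 2\le\eps$ up to constants; I will rescale $\eps_1,\eta$ by a fixed constant if needed so the final bound is $\le\eps$.
\end{itemize}

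Finally, I will repeat on failure. The number of runs is geometric with success probability at least $7/16$, so the expected number of runs is $O(1)$, and the expected total query count matches \eqref{eq:queries}. The initial state can be arbitrary, for example $\ket r$, because Theorem~\ref{thm:mix} is uniform over initial states.
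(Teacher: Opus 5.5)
Your proposal is correct and follows the paper's proof essentially step for step: the mixing-time choice, the diamond-norm triangle inequality, the success probability from $a_*\ge1/2$, a postselection perturbation bound, and geometric restarts. The differences are cosmetic. You use different constants, and you normalize the postselection bound by $a_*=\Tr(E_xP)$ rather than by the actual success probability $a$, which is equally valid. Your garbled arithmetic in that step needs no rescaling: $\norm{\sigma-P}_1\le\eps/2$ and $a_*\ge1/2$ already give conditional trace distance at most $\eps$, and using the sharper $D(\sigma,P)\le3\eps/16$ gives at most $3\eps/4$.
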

\begin{proof}
\proofstep{Step 1: choose the evolution and simulation errors.}
Initialize at $R$ and choose
\begin{equation}
 T=3\kappa^2\left\lceil2\log_2\frac{16}{\eps}\right\rceil,
 \qquad \eta=\frac\eps8.
 \label{eq:budget}
\end{equation}
Theorem~\ref{thm:mix} gives $D(\ee^{T\Lin}(R),P)\le\eps/16$.
Let $\widetilde{\mathcal E}_T$ be the simulated channel with
$\norm{\widetilde{\mathcal E}_T-\ee^{T\Lin}}_\diamond\le\eta$, and write
$\widehat\rho=\widetilde{\mathcal E}_T(R)$. By the definition of diamond norm,
\begin{equation}
 \begin{aligned}
 D(\widehat\rho,\ee^{T\Lin}(R))
 &=\tfrac12\norm{(\widetilde{\mathcal E}_T-\ee^{T\Lin})(R)}_1\\
 &\le\frac\eta2=\frac\eps{16}.
 \end{aligned}
\end{equation}
The triangle inequality then gives
\begin{equation}
 D(\widehat\rho,P)\le\frac\eps8=:d.
 \label{eq:prepared-error}
\end{equation}

\proofstep{Step 2: bound the probability of extracting the solution.}
Let $a=\Tr(E_x\widehat\rho)$ and $a_*=\Tr(E_xP)\ge1/2$.
The two-outcome measurement $\{E_x,R\}$ cannot increase trace distance, so
$\abs{a-a_*}\le D(\widehat\rho,P)\le d$. Therefore, since $\eps<1/4$,
\begin{equation}
 a\ge a_*-d\ge\frac12-d\ge\frac{15}{32}.
 \label{eq:actual-success}
\end{equation}

\proofstep{Step 3: control the normalized conditional state.}
Identify the solution sector with $\HX$, and write
$\widehat\rho_x=E_x\widehat\rho E_x/a$.
At the ideal target, \eqref{eq:extraction} implies
$E_xPE_x=a_*\ket x\bra x$. Subtracting the two conditional states gives the exact identity
\begin{equation}
 \begin{aligned}
 &a(\widehat\rho_x-\ket x\bra x)\\
 &\quad=E_x(\widehat\rho-P)E_x+(a_*-a)\ket x\bra x.
 \end{aligned}
 \label{eq:conditional-difference}
\end{equation}
The compression bound
$\norm{E_x(\widehat\rho-P)E_x}_1\le\norm{\widehat\rho-P}_1\le2d$
follows from $\norm{E_x}=1$. Also $\norm{\ket x\bra x}_1=1$ and $\abs{a-a_*}\le d\le2d$. Taking trace norms in \eqref{eq:conditional-difference} thus yields
\begin{align}
 D(\widehat\rho_x,\ket x\bra x)
 &\le\frac{2d+\abs{a-a_*}}{2a}
 \le\frac{2d}{a}\notag\\
 &\le\frac{2d}{1/2-d}
 \le\frac{8\eps}{15}<\eps.
 \label{eq:conditioning}
\end{align}
The loose middle inequality keeps the constants simple; no assumption that postselection preserves the original trace distance is used.

\proofstep{Step 4: substitute the evolution time into the query bound.}
Propositions~\ref{prop:Fencoding} and~\ref{prop:swapencoding} show that one call to the normalized collective oracle or its inverse costs $O(1)$ input queries. Equation~\eqref{eq:normalized-generator} requires normalized simulation time $4T$. Set $\ell=\log(1/\eps)$. For $\kappa\ge1$ and $0<\eps<1/4$,
\begin{equation}
 \begin{aligned}
 T&=\Theta(\kappa^2\ell),\\
 \log\frac{4T+1}{\eta}
 &=\Theta\bigl(\log(\kappa/\eps)\bigr).
 \end{aligned}
 \label{eq:log-substitution}
\end{equation}
Indeed, the logarithm on the left is $2\log\kappa+\ell+\log\ell+O(1)$, and $\ell\ge\log4$ makes $\log\ell$ at most a constant multiple of $\ell$. Inserting these estimates into \eqref{eq:simulation} gives the bound \eqref{eq:queries} for one attempt.

\proofstep{Step 5: account for restarting.}
Each fresh attempt uses the same simulated channel and succeeds with probability $a\ge15/32$. Independent restarts therefore have a geometric number of attempts with mean $1/a\le32/15$. Conditioning on the first successful attempt gives the same state $\widehat\rho_x$ as conditioning one attempt on success. Thus restarting multiplies the expected query cost by a constant and preserves \eqref{eq:conditioning}.
\end{proof}

\subsection{A matching lower bound and its scope}\label{sec:lower}
\begin{proposition}[Sharp conditioning and precision dependence]\label{prop:lower}
For every $\kappa\ge1$, there is an input obeying \eqref{eq:assumptions} for which the generator \eqref{eq:lindblad} satisfies
\begin{equation}
 \tmix(\eps)\ge\kappa^2\log\frac1\eps,
 \qquad 0<\eps<1.
 \label{eq:lower}
\end{equation}
\end{proposition}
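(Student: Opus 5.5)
The lower bound needs only a single explicit instance in which the generator \eqref{eq:lindblad} has a residual mode that relaxes at rate exactly $\kappa^{-2}$ and receives no population from resets. Take $N=2$ with
\begin{equation}
 A=\diag(1,\kappa^{-1}),\qquad \ket b=\ket1 .
\end{equation}
Then $\norm A=1$ and $\sigma_{\min}(A)=\kappa^{-1}$, so \eqref{eq:assumptions} holds. The solution is $\uket x=\ket1$, so $s=1$ and $\ket\psi=(\ket1+\ket r)/\sqrt2$ by \eqref{eq:target}. For any $N\ge2$ one can instead pad $A$ with an identity block, which changes nothing below.

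The key observation is that $\ket2$ is an eigenvector of $G$ lying in the non-target subspace. From \eqref{eq:Faction}, $F\ket2=\kappa^{-1}\ket2$. Since $F\ket1=\ket1$ and $F\ket r=-\ket1$, the vector $F^\dagger\ket2$ is proportional to $\ket2$. Hence $G\ket2=\kappa^{-2}\ket2$, and $\ket2\perp\ket\psi$ because $\ket\psi\in\spanop\{\ket1,\ket r\}$. We can therefore choose the non-target eigenbasis in \eqref{eq:modes} to contain $\ket{v_2}=\ket2$ with $\lambda_2=\kappa^{-2}$. The reset overlap of this mode vanishes, $r_2=\braket{2}{r}=0$. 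The population equations \eqref{eq:modeflow} then give $a_2'(t)=-\kappa^{-2}a_2(t)$ with no source term, so
\begin{equation}
 a_2(t)=\ee^{-t/\kappa^2}a_2(0).
\end{equation}
As a check one can verify directly: the no-jump evolution \eqref{eq:nojump} keeps $\ket2$ invariant, and every jump lands in $\ket r$, which has no $\ket2$ component.

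Now start from $\rho_0=\ket2\bra2$. Since $\ket2\bra2\le Q$, the lower half of \eqref{eq:pure-distance} gives
\begin{equation}
 D(\ee^{t\Lin}(\rho_0),P)\ge\Tr(Q\rho_t)\ge a_2(t)=\ee^{-t/\kappa^2}.
\end{equation}
If $t<\kappa^2\log(1/\eps)$, the right-hand side exceeds $\eps$. Hence the supremum in \eqref{eq:mixdefinition} is larger than $\eps$, and $\tmix(\eps)\ge\kappa^2\log(1/\eps)$, which is \eqref{eq:lower}.

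The only delicate point is the choice of instance itself. The slow singular direction of $A$ must be orthogonal both to $\ket b$ and to the reset state. Otherwise resets would refill or couple the slow mode, and the naive one-dimensional example $N=1$ relaxes in time $O(1)$ because its only mode has $\lambda=1+\kappa^{-2}$. Once the decoupled eigenvector is identified, the rest is a direct reading of \eqref{eq:modeflow}.
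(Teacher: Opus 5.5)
Your argument is correct and is essentially the paper's proof. You use the same instance $A=\diag(1,\kappa^{-1})$, $\ket b=\ket1$, $\rho_0=\ket2\bra2$, and the same exact decay $\bra2\rho_t\ket2=\ee^{-t/\kappa^2}$; you read it off \eqref{eq:modeflow} using $r_2=0$, while the paper takes the $\ket2$ matrix element of the generator directly, and your bound $D\ge\Tr(Q\rho_t)\ge a_2(t)$ is equivalent to the paper's measurement $\{\ket2\bra2,I-\ket2\bra2\}$.

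One correction to your closing aside, which the proof does not use. The $N=1$ instance $A=\kappa^{-1}$ does not relax in time $O(1)$. Each reset returns the fraction $1-p=\kappa^2/(1+\kappa^2)$ of the jumped population to the single mode, so $q'(t)=-p\lambda q(t)=-\kappa^{-2}q(t)$, and that instance saturates the same bound. Decoupling the slow mode from $\ket r$ is therefore convenient but not necessary.
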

\begin{proof}
Use the input and initialization
\begin{equation}
 \begin{gathered}
 A=\diag(1,\kappa^{-1}),\\
 \ket b=\ket1,\qquad \rho_0=\ket2\bra2.
 \end{gathered}
 \label{eq:lowerinstance}
\end{equation}
The singular values of $A$ obey \eqref{eq:assumptions}. Since $A^{-1}\ket b=\ket1$, the target is
$\ket\psi=(\ket1+\ket r)/\sqrt2$. In the system basis $(\ket1,\ket2,\ket r)$,
\begin{equation}
 \begin{gathered}
 F=\begin{pmatrix}1&0&-1\\0&\kappa^{-1}&0\end{pmatrix},\\[1ex]
 G=\begin{pmatrix}1&0&-1\\0&\kappa^{-2}&0\\-1&0&1\end{pmatrix}.
 \end{gathered}
 \label{eq:lower-matrices}
\end{equation}
Thus $G\ket2=\kappa^{-2}\ket2$, whereas
$\bra2R\ket2=\bra2P\ket2=0$.
Let $z(t)=\bra2\rho_t\ket2$. Taking this matrix element of the full generator yields
\begin{align}
 z'(t)
 &=\Tr(G\rho_t)\bra2R\ket2\notag\\
 &\quad-\tfrac12\bra2G\rho_t\ket2-\tfrac12\bra2\rho_tG\ket2\notag\\
 &=0-\tfrac12\kappa^{-2}z(t)-\tfrac12\kappa^{-2}z(t)\notag\\
 &=-\kappa^{-2}z(t).
 \label{eq:slow-population-ode}
\end{align}
Since $z(0)=1$, the exact solution is
\begin{equation}
 z(t)=\bra2\rho_t\ket2=\ee^{-t/\kappa^2}.
 \label{eq:slowpopulation}
\end{equation}
The measurement $\{\ket2\bra2,I-\ket2\bra2\}$ has outcome probabilities $(z(t),1-z(t))$ on $\rho_t$ and $(0,1)$ on $P$. Their classical total-variation distance is $z(t)$, hence $D(\rho_t,P)\ge z(t)$. Any time satisfying the worst-case mixing requirement must therefore obey
$\ee^{-t/\kappa^2}\le\eps$, or $t\ge\kappa^2\log(1/\eps)$.
\end{proof}

Together, Theorem~\ref{thm:mix} and Proposition~\ref{prop:lower} establish the worst-case scaling
\begin{equation}
\sup_{A,\ket b}\tmix(\eps)
 =\Theta\!\left(\kappa^2\log\frac1\eps\right)
 \label{eq:tight}
\end{equation}
for this rate-normalized residual-reset family, uniformly for $0<\eps\le1/2$. The lower bound is not an oracle lower bound for linear-system solving, nor a lower bound for every purely dissipative design. Its input has an easily prepared solution, and initialization at $\ket r$ does not populate the slow mode used in the proof. The statement concerns worst-case mixing over all initial states.

\section{Summary and outlook}\label{sec:discussion}
We have shown how a linear-system constraint can directly define a purely dissipative state-preparation algorithm. The residual map $F=(A,-\ket b)$ has a one-dimensional kernel encoding the augmented solution, and the jumps $J_j=\ket r\bra jF$ detect residual amplitude and reset it to a known reference level. The resulting time-independent generator has the solution encoding as its unique attractive pure stationary state. It uses $N+1$ active levels, contains no inverse matrix or target oracle, and allows constant-probability extraction without solution-norm estimation. The $O(\kappa^2\log(1/\eps))$ trace-distance mixing bound is dimension independent, holds for arbitrary initial density matrices, and is tight for this normalized family. Collective jump block encoding construction then connects this dynamical statement to an explicit query complexity.

The construction here aims to show the power of dissipation as a primitive for designing quantum algorithms. Optimal coherent algorithms have better conditioning dependence ($O(\kappa\log(1/\eps))$)~\cite{Costa,Dalzell}, leading to the question whether the Lindbladian approach can be further improved.

\par\bigskip

\textbf{Acknowledgments.}
Z.S. would like to thank Dong An for inspiring discussions. Z.S. acknowledges financial support from the ERC grant GIFNEQ 101163938.

\textbf{AI-use statement.} The initial question of whether dissipation can solve linear systems is raised by human, and is an open question in \cite{ShangODE}. The Lindbladian construction and the complexity analysis were developed through multiple rounds of discussion between the author and the GPT6 Astra, where AI makes the main technical contributions. The author has carefully verified the proofs and substantially rewrote and reorganized the presentation.

\bibliography{references}
\end{document}